\newtheorem{theorem}{Theorem}
\theoremstyle{definition}
\newtheorem{definition}{Definition}
\newcommand{\x}{{\mathbf{x}}}
\newcommand{\X}{{\mathbf{X}}}
\newcommand{\hatX}{{\mathbf{\hat X}}}
\title{A coding theorem for the\\ rate--distortion--perception function}
\author{}
\author{Lucas Theis \\
Google Research \\
\texttt{theis@google.com}
\And 
Aaron B. Wagner \\
Cornell University\\
\texttt{wagner@cornell.edu}}
\begin{document}

\maketitle

\begin{abstract}
    The \textit{rate--distortion--perception function} \citep[RDPF;][]{blau2019rethinking} has emerged as a useful tool for thinking about realism and distortion of reconstructions in lossy compression. Unlike the rate--distortion function, however, it is unknown whether encoders and decoders exist that achieve the rate suggested by the RDPF. Building on results by \cite{li2017}, we show that the RDPF can indeed be achieved using stochastic, variable-length codes. For this class of codes, we also prove that the RDPF lower-bounds the achievable rate.
\end{abstract}

\section{Introduction}

Lossy compression seeks to represent data with as few bits as possible while also reconstructing the data as closely as possible. In most applications of compression, however, low bit-rates and low distortion of the input are not the only desiderata. Another criterion often considered is the realism or ``perceptual quality'' of reconstructions. While distortion is typically measured using semimetrics, realism is more readily expressed in terms of a divergence between the source and reconstruction ensembles. When this divergence is zero, reconstructions are indistinguishable from real data and therefore maximally realistic, even though the distortion between source-reconstruction pairs may be large.

Approximately optimizing divergences using adversarial networks \citep{goodfellow2014gan} has been key to many approaches in generative modeling. In the context of compression,
the joint optimization of distortions and divergences has shown particular promise for photo-realistic reconstructions at low bit-rates \cite[e.g.,][]{rippel2017waveone,ledig2017srgan,santurkar2018generative,agustsson2019extreme}. \cite{blau2018tradeoff} provided a theoretical motivation for this approach by showing that in general there is a tension between divergences and distortions, suggesting that optimizing distortion alone is not sufficient to achieve realism.

\cite{blau2019rethinking} further formalized the idea of a trade-off between different desiderata and introduced the \textit{(information) rate--distortion--perception function (RDPF)}. This function gives the hypothetical rate of an optimal code under constraints on distortion and realism. However, unlike the conventional rate--distortion function \citep{shannon1948,shannon1959}, it is not yet clear whether the rate suggested by the RDPF is operationally achievable, and under what conditions.

We show that the RDPF indeed describes the limit of compression under suitable constraints on distortion and realism. In Section~\ref{sec:preliminaries} we will define the most important concepts needed for our results in Section~\ref{sec:achievability}. 

\section{Preliminaries}
\label{sec:preliminaries}

In practice, the data source $\X$ is generally modeled as a real-valued (though possibly
discrete) random variable
$\X: \Omega \rightarrow \mathbb{R}^M$. Our results 
apply to sources in more abstract spaces, however.
Logarithms are to the base 2 and entropy and mutual information are measured in bits.

\vspace{7pt}
\begin{definition}
    For any distortion $d$, divergence $D$, and a random variable $\X$, the \textit{(information) rate--distortion--perception function} (RDPF) is defined as \citep{blau2019rethinking}
    \begin{align}
        R(\theta_d, \theta_D) = \inf_{P_{\hatX \mid \X}}\ I[\X, \hatX]
        \quad\text{s.t.}\quad \mathbb{E}[d(\X, \hatX)] \leq \theta_d
        \quad\text{and}\quad D[P_\X, P_\hatX] \leq \theta_D.
    \end{align}
\end{definition}

More generally, we can define a rate function for an arbitrary set of constraints on $P_{\X, \hatX}$ as follows.

\vspace{7pt}
\begin{definition}
    For a source $\X \sim P_{\X}$ and a set of real-valued functions $D_i$ of joint distributions $P_{\X, \hatX}$, the \textit{(information) rate function} (IRF) is defined as
    \begin{align}
        R(\bm{\theta}) = \inf_{P_{\hatX \mid \X}}\ I[\X, \hatX]
        \quad\text{s.t.}\quad \forall i: D_i[ P_{\X, \hatX} ] \leq \theta_i.
        \label{eq:IRF}
    \end{align}
\end{definition}

The RDPF is a special case of the IRF where we chose
\begin{align}
    D_1[P_{\X, \hatX}] = \mathbb{E}[d(\X, \hatX)]
    \quad\text{and}\quad
    D_2[P_{\X, \hatX}] = D[P_{\X}, P_{\hatX}].
\end{align}
The generalized constraint in Equation~\ref{eq:IRF} was anticipated by \cite{shannon1948}.
We prove our results for the IRF, from which results on the RDPF immediately
follow. The ready unification of distortion and perception suggests that these
concepts are less distinct than it might first appear. Note that rate-distortion theorems
with multiple, competing constraints have been considered in other contexts~\cite[][Exercise 2.2.14]{kakavand2006,Csiszar1981}.

So far we have only defined informational rate functions which may or may not have any connection to the operational compression problem. The hope is, of course, that the value of these functions represents the rate of the best codes whose reconstructions satisfy the given constraints. That is, we hope that a code exists which \textit{achieves} the given rate and, conversely, that no code can do better.

\vspace{7pt}
\begin{definition}
    For an arbitrary set $\mathcal{X}$,  we define a \textit{(stochastic) encoder} as any function contained in
    \begin{align}
        \mathcal{F} = \{ f: \mathcal{X} \times \mathbb{R} \rightarrow \mathbb{N}_0 \}.
    \end{align}
    Similarly, we define \textit{(stochastic) decoders} as elements of 
    \begin{align}
        \mathcal{G} = \{ g: \mathbb{N}_0 \times \mathbb{R} \rightarrow \mathcal{X} \}.
    \end{align}
    A \textit{(stochastic) code} is an element of $\mathcal{F} \times \mathcal{G}$.
\end{definition}

Note that a stochastic encoder expects data and an additional source of randomness as input. Without loss of generality, we can assume that this randomness is represented by a single real number (or an infinite number of bits).
Similar to the encoder, the decoder receives additional random bits as input. The encoder and decoder may receive the same random bits or they may each have their own source of randomness, and this has implications for the efficiency of a code \citep[e.g.,][]{cuff2008}. Throughout the paper, we
assume that the encoder and decoder receive the same bits.

Next we define our notion of achievability. Following the conventions in information theory, we distinguish between one-shot achievability and (asymptotic) achievability.

\vspace{7pt}
\begin{definition}
   For a source $\X \sim P_\X$ and a given set of constraints, we say that a rate $R$ is \textit{one-shot achievable} if an  encoder $f \in \mathcal{F}$, a decoder $g \in \mathcal{G}$, and a random variable $U$ exist with
    \begin{align}
        K = f(\X, U)
        \quad\text{and}\quad
        \hatX = g(K, U)
    \end{align}
    such that the joint distribution $P_{\X, \hatX}$ satisfies the constraints and the conditional entropy of $K$ is not more than $R$, $H[K \mid U] \leq R$.
\end{definition}

This definition is justified by the fact that a variable-rate prefix-free entropy code always exists whose coding cost is within $1$ bit of the entropy, $H[K \mid U]$. Note that an entropy coder may use the variable $U$ to assign bits to $K$ since $U$ is known to both the encoder and the decoder. 

In one-shot achievability, the source $\X$ is not assumed to have any discernible structure. This is
in contrast to (asymptotic) achievability, for which the source is assumed to be a long i.i.d.\ sequence.

\vspace{7pt}
\begin{definition}
    For a source $\X \sim P_\X$ and a given set of constraints, we say that a rate $R$ is \textit{(asymptotically) achievable} if there exists a sequence of
    encoders
    $f_N: \mathcal{X}^N \times \mathbb{R} \rightarrow \mathbb{N}_0$ and decoders
    $g_N: \mathbb{N}_0 \times \mathbb{R} \rightarrow \mathcal{X}^N$ with
    \begin{align}
        K_N = f(\X^N, U)
        \quad\text{and}\quad
        \hatX^N = g(K_N, U)
    \end{align}
    such that each joint distribution $P_{\X_n, \hatX_n}$ ($n = 1, \dots, N$) satisfies the constraints and
    \begin{align}
        \lim_{N \rightarrow \infty} H[K_N \mid U]/N \leq R.
        \label{eq:asymptoticrate}
    \end{align}
    \label{def:asymptotic}
\end{definition}
An important detail is that this definition requires that the reconstruction of each individual data point satisfies the constraints, not just when averaged over the $N$ points.

\section{Achievability of the IRF (and RDPF)}
\label{sec:achievability}

We first prove the one-shot achievability of an upper bound on the IRF. 

\vspace{7pt}
\begin{theorem}
    \label{th:oneshot}
    Let an arbitrary source $\X \sim P_\X$ and constraints $D_i[P_{\X, \hatX}] \leq \theta_i$ be given. If
    \begin{align}
        R > R(\bm{\theta}) + \log(R(\bm{\theta}) + 1) + 4,
    \end{align}
    then $R$ is one-shot achievable.
\end{theorem}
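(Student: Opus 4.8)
The plan is to apply the strong functional representation lemma of \cite{li2017}, whose Poisson-matching construction produces exactly the kind of stochastic, variable-length code demanded by our notion of one-shot achievability, and which has the crucial property that it reproduces the target joint distribution $P_{\X,\hatX}$ \emph{exactly}---so that the constraints $D_i[P_{\X,\hatX}]\le\theta_i$ are inherited for free and never need to be re-examined.

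First I would dispose of the trivial cases: if the feasible set $\{P_{\hatX\mid\X}: D_i[P_{\X,\hatX}]\le\theta_i\ \forall i\}$ is empty, or $R(\bm{\theta})=\infty$, then no $R$ satisfies the hypothesis and there is nothing to prove. Otherwise, set $\phi(t)=t+\log(t+1)+4$, which is continuous and strictly increasing on $[0,\infty)$. Since $R>\phi(R(\bm{\theta}))$ and $R(\bm{\theta})$ is the infimum of $I[\X,\hatX]$ over feasible conditionals, continuity of $\phi$ lets me fix a \emph{single} feasible $P_{\hatX\mid\X}$---write $(\X,\hatX)$ for the resulting pair---with $\phi(I[\X,\hatX])<R$; by feasibility all constraints hold with this choice.

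Next I would invoke the strong functional representation lemma on $(\X,\hatX)$: there exist a random variable $U$ independent of $\X$ and deterministic maps such that $\hatX$ is a function of $(\X,U)$, the joint law of $(\X,\hatX)$ is preserved, and the index $K\in\mathbb{N}_0$ identifying $\hatX$ within the codebook carried by $U$ satisfies $H[K\mid U]\le I[\X,\hatX]+\log(I[\X,\hatX]+1)+4=\phi(I[\X,\hatX])<R$. Encoding the randomness of $U$ into a single real number (or infinitely many bits), as our conventions permit, I let $f\in\mathcal{F}$ be the map $(\X,U)\mapsto K$ and $g\in\mathcal{G}$ the map $(K,U)\mapsto\hatX$. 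Then $K=f(\X,U)$, $\hatX=g(K,U)$, the joint distribution $P_{\X,\hatX}$ satisfies every constraint by construction, and $H[K\mid U]<R$, so $R$ is one-shot achievable.

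The step requiring the most care is ensuring the lemma delivers a bound on the entropy of the \emph{index} $K$ (the quantity that is actually entropy-coded), not merely on $H[\hatX\mid U]$: these coincide when the codebook entries are almost surely distinct (e.g.\ for non-atomic reconstruction marginals) but may differ for general $\mathcal{X}$, so I would appeal to the index form of the Poisson functional representation in \cite{li2017}, which is precisely what is needed. A secondary, minor point is that the lemma presumes regular conditional distributions exist for $P_{\X,\hatX}$; this holds on the standard Borel spaces relevant in practice and is the only place the abstractness of $\mathcal{X}$ enters.
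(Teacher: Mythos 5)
Your proposal is correct and follows essentially the same route as the paper: both invoke the Poisson functional representation (the strong functional representation lemma) of Li and El Gamal to construct a stochastic code that communicates an exact sample from a near-optimal $P_{\hatX\mid\X}$, so the constraints are inherited for free, and then apply the bound $H[K]\le I[\X,\hatX]+\log(I[\X,\hatX]+1)+4$ on the index entropy. Your explicit handling of the degenerate case $R(\bm{\theta})=\infty$, your use of the monotonicity of $\phi(t)=t+\log(t+1)+4$ in place of the paper's $\varepsilon$ bookkeeping, and your remark that the bound must be on the index entropy $H[K\mid U]$ rather than $H[\hatX\mid U]$ are careful clarifications of the same argument rather than a different approach.
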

\begin{proof}
    This result is a direct consequence of the properties of the \textit{Poisson functional representation} introduced by \cite{li2017}.
    By the definition of the IRF, there exists a $P_{\hatX \mid \X}$ such that
    \begin{align}
        \forall i: D_i[ P_{\X, \hatX}] \leq \theta_i
        \quad\text{and}\quad
        I[\X, \hatX] \leq R(\bm{\theta}) + \varepsilon
    \end{align}
    for any $\varepsilon > 0$. Using a shared source of randomness, an encoder and decoder are both able to generate a draw from the following random variables for $i \in \mathbb{N}$,
    \begin{align}
        \label{eq:rvs}
        \hatX_i \sim P_{\hatX}, \quad S_i \sim \text{Exp}(1), \quad T_i &= \textstyle\sum_{j = 1}^i S_i.
    \end{align}
    Here, $S_i$ are exponentially distributed so that $T_i$ are the epochs of a Poisson process. For an input $\x$, the encoder selects one of the candidates $\hatX_i$ as follows,
    \begin{align}
        K &= \underset{i \in \mathbb{N}}{\text{argmin}}\ T_i \frac{d P_\hatX}{dP_{\hatX \mid \X}(\,\cdot \mid \x)}(\hatX_i).
    \end{align}
    After receiving $K$, the decoder reconstructs $\hatX_K$. That is,
    \begin{align}
        f(\x, U) = K \quad \text{and} \quad g(K, U) = \hatX_K,
    \end{align}
    where $U$ is a random variable representing the shared source of randomness used to construct the random variables in Equation~\ref{eq:rvs}. \cite{li2017} proved that $\hatX_K \sim P_{\hatX \mid \X}$, that is, the code is communicating a sample from the conditional distribution. It was further shown that
    \begin{align}
        H[K] < I[\X, \hatX] + \log(I[\X, \hatX] + 1) + 4,
    \end{align}
    that is, the indices' entropy is not much more than the mutual information. Hence,
    \begin{align}
        H[K \mid U] \leq H[K] < R(\bm{\theta}) + \log(R(\bm{\theta}) + \varepsilon) + 4 + \varepsilon < R,
    \end{align}
    if we choose $\varepsilon$ small enough.
\end{proof}

Next, we show that the IRF is a lower bound on the one-shot achievable rate. This extends a similar result by \citet[][Appendix C]{blau2019rethinking} from deterministic to stochastic codes.

\vspace{7pt}
\begin{theorem}
    Let an arbitrary source $\X \sim P_\X$ and constraints $D_i[P_{\X, \hatX}] \leq \theta_i$ be given.
    If $R < \infty$ is one-shot achievable,
    then $R \ge R(\bm{\theta})$.
    \label{theorem:oneshotlower}
\end{theorem}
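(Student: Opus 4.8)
The plan is to reduce the claim to the single inequality $I[\X, \hatX] \le H[K \mid U]$ and then chain standard information inequalities. First I would unpack one-shot achievability: there are $f \in \mathcal{F}$, $g \in \mathcal{G}$, and a random variable $U$ (which we take to be independent of $\X$, consistent with its role as an ``additional source of randomness'') with $K = f(\X, U)$ and $\hatX = g(K, U)$, such that the joint law $P_{\X, \hatX}$ meets all constraints $D_i[P_{\X,\hatX}] \le \theta_i$ and $H[K \mid U] \le R$. Disintegrating this joint law against the fixed source law $P_\X$ yields a conditional $P_{\hatX \mid \X}$ that is feasible for the infimum defining $R(\bm{\theta})$, so $R(\bm{\theta}) \le I[\X, \hatX]$. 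Hence it suffices to show $I[\X, \hatX] \le R$, and since $H[K \mid U] \le R$ it is enough to prove $I[\X, \hatX] \le H[K \mid U]$.

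Next I would establish this via the chain
\begin{align}
    I[\X, \hatX] \;\le\; I[\X, (\hatX, U)] \;=\; I[\X, U] + I[\X, \hatX \mid U] \;=\; I[\X, \hatX \mid U] \;\le\; I[\X, K \mid U] \;\le\; H[K \mid U].
\end{align}
The first step adds $U$ to the reconstruction side and uses nonnegativity of conditional mutual information ($I[\X,(\hatX,U)] = I[\X,\hatX] + I[\X, U \mid \hatX]$); the first equality is the chain rule; the second equality uses $I[\X, U] = 0$ because $U \perp \X$; the next inequality is the data-processing inequality applied conditionally on $\{U = u\}$, where $\hatX = g(K, u)$ is a deterministic function of $K$, followed by averaging over $U$; and the final inequality holds because $K$ is discrete, so $I[\X, K \mid U] \le H[K \mid U]$. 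Combining with $R(\bm{\theta}) \le I[\X, \hatX]$ and $H[K \mid U] \le R$ gives $R(\bm{\theta}) \le R$, as desired.

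The information inequalities themselves are routine; the part that needs a little care is the measure-theoretic bookkeeping in the abstract alphabet $\mathcal{X}$, namely confirming that the disintegration $P_{\hatX \mid \X}$ exists (unproblematic under the paper's standing assumptions) and that the conditional mutual informations in the chain are well defined. Finiteness causes no trouble: $H[K \mid U] \le R < \infty$ forces every term in the displayed chain to be finite, so no $\infty - \infty$ indeterminacy arises when invoking the chain rule. The one modeling assumption worth flagging is the independence $\X \perp U$; without it the argument only delivers $R \ge I[\X, \hatX] - I[\X, U]$, so the theorem should be read with that (standard) convention in force.
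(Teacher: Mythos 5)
Your chain of inequalities is the paper's proof, simply written from left to right rather than right to left: add $U$ to the reconstruction side, use independence and the chain rule to pass to $I[\X, \hatX \mid U]$, apply conditional data processing to replace $\hatX$ with $K$, and bound by $H[K \mid U]$. The extra remarks about disintegration and the $\X \perp U$ convention are sensible but do not change the argument; this is the same route the paper takes.
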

\begin{proof}
    Applying properties
    of mutual information and discrete entropy, we have
     \begin{align}
        H[f(\X,U) \mid U] & \ge I[\X, f(\X,U) \mid U]  \\
                       & \ge I[\X, \hat{\X} \mid U] \\
                       & = I[\X, \hat{\X} \mid U] + I[\X, U] \\
                       & = I[\X, (\hat{\X},U)] 
                        \ge  I[\X,\hat{\X}] \ge R(\bm{\theta}). \qedhere
   \end{align}
\end{proof}
We are now ready to prove the IRF (and hence RDPF) coding theorem.
\vspace{4pt}
\begin{theorem}
    \label{th:asymptotic}
    Let an arbitrary source $\X \sim P_\X$ and constraints $D_i[P_{\X, \hatX}] \leq \theta_i$ be given. Then $R < \infty$ is achievable if and only if 
    $R \ge R(\bm{\theta})$.
\end{theorem}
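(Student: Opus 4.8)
The plan is to obtain Theorem~\ref{th:asymptotic} directly from the two one-shot results, Theorems~\ref{th:oneshot} and~\ref{theorem:oneshotlower}, by working with blocks of length $N$. The key structural observation is that the constraints are imposed on each single-letter marginal $P_{\X_n,\hatX_n}$, not on the $N$-letter joint; consequently no genuine single-letterization is required. For achievability a product channel suffices, and for the converse the problem decouples across coordinates.

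\emph{Achievability ($R \ge R(\bm\theta)\Rightarrow R$ achievable).} If no channel meets the constraints then $R(\bm\theta)=\infty$ and there is nothing to prove, so assume $R(\bm\theta)\le R<\infty$. Fix any sequence $\varepsilon_N\downarrow 0$. For each $N$, pick a single-letter channel $P_{\hatX\mid\X}$ satisfying all constraints with $I[\X,\hatX]\le R(\bm\theta)+\varepsilon_N<\infty$, and apply the Poisson-functional-representation code of Theorem~\ref{th:oneshot} to the source $\X^N$ with target conditional $\prod_{n=1}^N P_{\hatX\mid\X}$ and reference distribution $\prod_{n=1}^N P_{\hatX}$. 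By the correctness part of \cite{li2017}, $\hatX^N$ given $\X^N$ is exactly a draw from the product channel, so every marginal $P_{\X_n,\hatX_n}$ coincides with $P_{\X,\hatX}$ and is therefore feasible; moreover $I[\X^N,\hatX^N]=N\,I[\X,\hatX]$, so Theorem~\ref{th:oneshot}'s entropy bound gives $H[K_N\mid U]\le H[K_N] < N\,I[\X,\hatX]+\log(N\,I[\X,\hatX]+1)+4$. Dividing by $N$ and letting $N\to\infty$ yields $\lim_N H[K_N\mid U]/N \le R(\bm\theta)\le R$.

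\emph{Converse ($R$ achievable $\Rightarrow R\ge R(\bm\theta)$).} Let $(f_N,g_N)$ be an achieving sequence, $K_N=f_N(\X^N,U)$, $\hatX^N=g_N(K_N,U)$, with each $P_{\X_n,\hatX_n}$ feasible and $\lim_N H[K_N\mid U]/N\le R$. Running the inequality chain of Theorem~\ref{theorem:oneshotlower} with $\X^N$ in place of $\X$ (using $\X^N\perp U$) gives $H[K_N\mid U]\ge I[\X^N,\hatX^N]$. Since the source is i.i.d., the coordinates $\X_1,\dots,\X_N$ are independent, and the standard superadditivity of mutual information under independent inputs---chain rule $I[\X^N,\hatX^N]=\sum_n I[\X_n,\hatX^N\mid\X^{n-1}]$, monotonicity to replace $\hatX^N$ by $\hatX_n$, and $I[\X_n,\hatX_n\mid\X^{n-1}]\ge I[\X_n,\hatX_n]$ because $\X_n\perp\X^{n-1}$---gives $I[\X^N,\hatX^N]\ge\sum_{n=1}^N I[\X_n,\hatX_n]$. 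Finally, since $\X_n\sim P_\X$ and $P_{\X_n,\hatX_n}$ meets all constraints, the channel $P_{\hatX_n\mid\X_n}$ is feasible for the IRF, so $I[\X_n,\hatX_n]\ge R(\bm\theta)$ for every $n$. Hence $H[K_N\mid U]\ge N\,R(\bm\theta)$, and taking the limit, $R\ge R(\bm\theta)$.

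The main obstacle is conceptual rather than technical: recognizing that, because the constraints act coordinatewise, the usual gap between one-shot and asymptotic coding theorems collapses, so both directions reduce to the already-established one-shot statements plus an elementary product construction and a superadditivity bound. The one point requiring genuine care is the $\varepsilon_N\downarrow 0$ bookkeeping in the achievability direction: the definition of achievability demands a \emph{single} code sequence whose per-letter rate converges to a value $\le R$, so the boundary case $R=R(\bm\theta)$ forces the single-letter mutual information to approach $R(\bm\theta)$ along the sequence---while staying finite, so that Theorem~\ref{th:oneshot}'s bound remains non-vacuous---rather than fixing one slightly suboptimal channel once and for all.
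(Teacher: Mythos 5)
Your proposal is correct and follows essentially the same route as the paper: the achievability direction uses the Poisson functional representation on the $N$-fold product channel built from near-optimal single-letter channels (the paper uses $\varepsilon_N = 1/N$ explicitly), and the converse uses the one-shot lower bound applied blockwise together with superadditivity of mutual information under independent coordinates. The only cosmetic difference is that the paper routes the converse through an auxiliary quantity $R_N(\bm{\theta})$ (the IRF of $\X^N$) and then shows $R_N(\bm{\theta}) \ge N R(\bm{\theta})$, whereas you apply the same chain of inequalities directly to the achieving code; the underlying argument is identical.
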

\begin{proof}
    We first show that if $R(\bm{\theta}) < \infty$, then $R(\bm{\theta})$ is itself achievable. 
    By the definition of the IRF, for each $N > 0$ there exists a (potentially different) $P_{\hatX \mid \X}$ such that
    \begin{align}
        I[\X, \hatX] <  R(\bm{\theta}) + \frac{1}{N}
    \end{align}
    and $P_{\X, \hatX}$ satisfies the constraints. Following the same approach as in Theorem~\ref{th:oneshot}, we can construct a code communicating a sample from
    \begin{align}
        P_{\hatX^N \mid \X^N} = \textstyle\prod_{n = 1}^N P_{\hatX \mid \X},
    \end{align}
    where the right-hand side indicates a product measure, that is, $\hatX_n$ will only depend on $\X_n$. Each individual data point and reconstruction $\smash{(\X_n, \hatX_n)}$ follows the marginal distribution $P_{\X, \hatX}$ and thus satisfies the constraints. We encode the data points jointly into some $K_N$ so that its average entropy is now at most
    \begin{align}
        \frac{H[K_N]}{N}
        &< \frac{1}{N} (I[\X^N, \hatX^N] + \log(I[\X^N, \hatX^N] + 1) + 4) \\
        &= I[\X, \hatX] + \frac{1}{N} \log(N I[\X, \hatX] + 1) + \frac{4}{N} \\
        &< R(\bm{\theta}) + \frac{1}{N}\log(N R(\bm{\theta}) + 2) + \frac{5}{N}.
    \end{align}
    This converges to $R(\bm{\theta})$ in the limit of large $N$, proving the former's achievability.
    Turning to the converse, for each $N$, let $R_N(\bm{\theta})$ denote the IRF of $\X^N$ with constraints $D_i[P_{\X_n, \hatX_n}] \leq \theta_i\ \forall\ i, n$. If $R$ is achievable then there exists a sequence of codes $(f_N, g_N)$ satisfying these constraints and
    \begin{align}
        \lim_{N \rightarrow \infty} H[K_N \mid U]/N \leq R,
    \end{align}
    where $K_N = f_N(\X_N, U)$. But by Theorem~\ref{theorem:oneshotlower} we must have $H[K_N \mid U] \geq R_N(\bm{\theta})$ and so
    \begin{equation}
        R \ge \lim_{N \rightarrow \infty} H[K_N \mid U]/N \ge \lim_{N \rightarrow \infty} R_N(\bm{\theta})/N.
    \end{equation}
    It thus suffices to show that $R_N(\bm{\theta}) \ge N R(\bm{\theta})$. But this follows from the chain rule for mutual information, the data processing theorem, and the independence of $X_n$ from $X_m$ ($m \neq n$):
    \begin{align*}
        I[\X^N,\hat{\X}^N] 
        &= \textstyle\sum_{n = 1}^N I[\X_n,\hat{\X}^N \mid \X_1,\ldots,\X_{n-1}] \\
        &= \textstyle\sum_{n = 1}^N I[\X_n,\hat{\X}^N \mid \X_1,\ldots,\X_{n-1}] 
            + \sum_{n = 1}^N I[\X_n,(\X_1,\ldots,\X_{n-1})] \\
        &= \textstyle\sum_{n = 1}^N I[\X_n,(\hat{\X}^N, \X_1,\ldots,\X_{n-1})] \\
        &\ge \textstyle\sum_{n = 1}^N I[\X_n,\hat{\X}_n] 
        \ge \textstyle\sum_{n = 1}^N R(\bm{\theta}) = N R(\bm{\theta}). \qedhere
    \end{align*}
\end{proof}

\section{Discussion}

By building on the results of \cite{li2017} we proved a complete coding theorem for the
RDPF (and more generally, any IRF). This clarifies the practical relevance of the RDPF. Note that our asymptotic formulation in Definition~\ref{def:asymptotic}
requires the constraints to hold for each $n$, not just on average. This stronger constraint
is satisfied by the construction of \cite{li2017} and eliminated the need for the assumption, used
by~\cite{blau2019rethinking}, that the divergence measure is convex. On the other hand, unlike
\cite{blau2019rethinking},
our formulation allowed for variable-length coding and common randomness between the encoder and decoder. The case of fixed-rate and/or deterministic codes is also of interest but is left for future research.

\subsubsection*{Acknowledgments}
We would like to thank Johannes Ballé, Eirikur Agustsson and Ashok Popat for helpful discussions and feedback on the manuscript. The second author was supported
by the US National Science Foundation under grants CCF-1617673, CCF-1934985, and
CCF-2008266 and the US Army Research Office under grant W911NF-18-1-0426.

\bibliographystyle{plainnat}
\bibliography{references}

\end{document}